\begin{document}

\newtheorem{theorem}{Theorem}[section]
\newtheorem{corollary}[theorem]{Corollary}
\newtheorem{definition}[theorem]{Definition}
\newtheorem{proposition}[theorem]{Proposition}
\newtheorem{lemma}[theorem]{Lemma}
\newtheorem{example}[theorem]{Example}
\newtheorem{conjecture}[theorem]{Conjecture}
\newenvironment{proof}{\noindent {\bf Proof.}}{\rule{3mm}{3mm}\par\medskip}
\newcommand{\remark}{\medskip\par\noindent {\bf Remark.~~}}
\title{Higher weight distribution of Linearized Reed-Solomon codes}
\author{Haode Yan\footnote{Dept. of Math., Shanghai Jiao Tong Univ., Shanghai 200240, hdyan@sjtu.edu.cn.},\ Yan Liu\footnote{Dept. of Math., SJTU, Shanghai 200240, liuyan0916@sjtu.edu.cn.},\  Chunlei Liu\footnote{Corresponding author, Dept. of Math., SJTU, Shanghai 200240, 714232747@qq.com.}}
\date{}
\maketitle
\thispagestyle{empty}

\abstract{Let $m$, $d$ and $k$ be positive integers such that $k\leq \frac{m}{e}$, where $e=(m,d)$. Let $p$ be an prime number and $\pi$  a primitive element of ${\mathbb F}_{p^m}$. To each $\vec {a}=(a_{0}, \cdots, a_{k-1})\in \mathbb{F}_{p^{m}}^{k}$, we associate the linearized polynomial
$$f_{\vec {a}}(x)=\sum_{j=0}^{k-1}a_{j}x^{p^{jd}},$$
as well as the sequence
\[c_{\vec {a}}=(f_{\vec {a}}(1),f_{\vec {a}}(\pi), \cdots , f_{\vec {a}}(\pi^{p^m-2})). \]
Let \[C=\{c_{\vec a}\mid~\vec{a}\in{\mathbb F}_{p^m}^k\}\] be the cyclic code formed by the sequences $c_{\vec {a}}$'s. We call $C$ a linearized Reed-Solomon code. The higher weight distribution of the code $C$
is determined in the present paper.

\noindent {\bf Key words}: cyclic codes, higher weights, linearized polynomials

\noindent {\bf MSC:} 94B15, 05C50, 11T71.

\section{\small{INTRODUCTION}}
\paragraph{}
Let $q$ be a prime power, and $C$ an $[n,k]$-linear code over ${\mathbb F}_q$. For an ${\mathbb F}_q$-subspace $H$ of $C$, the weight of $H$ is defined to be
\[{\rm wt}(H)=\#\{0\leq i\leq n-1|~c_i\neq0,~\text{ for some }(c_0,c_1,\cdots,c_{n-1})\in H\}.\]
If $\dim H=r$, ${\rm wt}(H)$ is called an $r$-dimensional weight of $C$. If $r>1$, an $r$-dimensional weight of $C$ is called a higher weight of $C$.
\paragraph{}
The notion of higher weights of a linear code was first introduced by Helleseth-Klove-Mykkeltveit \cite{HKM} in 1977. Motivated by Ozarow-Wyner's paper \cite{OW}  on the cryptographical significance of a linear code in a wire-tap channel, Wei \cite{Wei} rediscovered the notion of higher weights of a linear code in 1991. Since then, the study of the higher weights of linear codes has attracted lots of attention, see, for example, the papers \cite{A-B-L14},\cite{B-L-V14},\cite{C-C97},\cite{C-M-T01},\cite{C91},\cite{C-L-Z94},\cite{D-F-G14},\cite{D-S-V96},
\cite{F-T-W92},\cite{GO},
\cite{H-K92},\cite{H-K09},\cite{H-K-L95},\cite{J-L97},\cite{J-L07},\cite{M-P-P98},\cite{M-R99},\cite{M94},\cite{S-C95},
\cite{S-W03},\cite{S-V94},\cite{T-V95},\cite{G-V94},\cite{G-V942},\cite{G-V95},\cite{G-V952},\cite{G-V953},\cite{G-V96}, \cite{V96} and \cite{YKS}.
\paragraph{}
For $1\leq r\leq k$, and for $1\leq w\leq n$, the number of $r$-dimensional subspaces of $C$ of weight $i$ is
\[n_{r,w}=\#\{H\mid {\rm wt}(H)=w,~\dim H=r\}.\]
Given a $[n,k]$-linear code $C$, it is challenging to determine the set
\[\{n_{r,w}\mid~1\leq r\leq k,~ n-k+r\leq w\leq k\},\]
which is called the higher weight distribution of $C$.
The higher weight distribution is known only for a few classes of codes. Helleseth-Kl\o ve-Mykkeltveit \cite{HKM} determined the higher weight distribution of the MDS code. Kl\o ve \cite{Kl} determined the higher weight distribution of the binary $[23,11]$-Golay code. Helleseth \cite{He} and Hirschfeld-Tsfasman-Vladut \cite{HTV} determined the higher weight distribution of some other classes of codes.
\paragraph{}
For $1\leq r\leq k$, the minimum $r$-dimensional weight of $C$ is
\[d_r=\min\{{\rm wt}(H)\mid~\dim H=r\}.\]
Given a $[n,k]$-linear code $C$, it is significant to determine the set
$\{d_1,d_2,\cdots,d_k\}$,
which is called the weight hierarchy of $C$.
The weight hierarchy, though an easier problem than the higher weight distribution, is still known only for a few classes of codes.
Helleseth-Kumar \cite{HK95} determined the weight hierarchy of the Kasami code.
Helleseth-Kumar \cite{HK96},
Vlugt \cite{Vl95} and Yang-Li-Feng-Lin \cite{YLFL} determined the weight hierarchy of irreducible cyclic codes.
Heijnen-Pellikaan \cite{HP} determined the weight hierarchy of the Reed-Muller code. Barbero-Munuera \cite{BM} determined the weight hierarchy of Hermitian codes. Xiong-Li-Ge determined the weight hierarchy of some reducible cyclic codes. Wei-Yang \cite{WY}, Helleseth-Kl\o ve \cite{HK}, Park \cite{Par} and Martinez-Perez-Willems \cite{MW} determined the weight hierarchy of some product codes.
\paragraph{}
In the present paper we shall define the linearized Reed-Solomon code and determine its higher weight distribution.
Let $m$, $d$ and $k$ be positive integers such that $k\leq \frac{m}{e}$, where $e=(m,d)$. Let $p$ be an prime number and $\pi$  a primitive element of ${\mathbb F}_{p^m}$. To each $\vec {a}=(a_{0}, \cdots, a_{k-1})\in \mathbb{F}_{p^{m}}^{k}$, we associate the linearized polynomial
$$f_{\vec {a}}(x)=\sum_{j=0}^{k-1}a_{j}x^{p^{jd}},$$
as well as the sequence
\[c_{\vec {a}}=(f_{\vec {a}}(1),f_{\vec {a}}(\pi), \cdots , f_{\vec {a}}(\pi^{p^m-2})). \]
From now on we write \[C=\{c_{\vec a}\mid~\vec{a}\in{\mathbb F}_{p^m}^k\}.\]  We call it a linearized Reed-Solomon code.
Our preliminary result is  the following.
\begin{theorem}\label{valueset} If $H$ is a ${\mathbb F}_{p^m}$-subspace of $C$ of dimension $r>0$, then
\[{\rm wt}(H) \in \{p^m-p^{ei}|\  0\leq i\leq k-r \}.\]
\end{theorem}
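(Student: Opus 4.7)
The plan is to re-express each codeword as an $\mathbb{F}_{p^e}$-linear operator. Let $\sigma\colon\mathbb{F}_{p^m}\to\mathbb{F}_{p^m}$ denote the map $x\mapsto x^{p^d}$. Since $e=(m,d)$, $\sigma$ has order $n:=m/e$ with fixed field $\mathbb{F}_{p^e}$, so $f_{\vec a}=\sum_{j=0}^{k-1}a_j\sigma^j$ is an $\mathbb{F}_{p^e}$-linear endomorphism of $\mathbb{F}_{p^m}$. Because $k\leq n$, Artin's independence of characters makes $1,\sigma,\ldots,\sigma^{k-1}$ $\mathbb{F}_{p^m}$-linearly independent as functions on $\mathbb{F}_{p^m}$, so $\vec a\mapsto c_{\vec a}$ is injective and an $r$-dimensional $\mathbb{F}_{p^m}$-subspace $H\subseteq C$ pulls back bijectively to an $r$-dimensional subspace $\tilde H$ of $R_k:=\bigoplus_{j=0}^{k-1}\mathbb{F}_{p^m}\sigma^j$.

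First I would pin down the shape of ${\rm wt}(H)$. Let
\[V=\bigcap_{f\in\tilde H}\ker\bigl(f|_{\mathbb{F}_{p^m}}\bigr)\subseteq\mathbb{F}_{p^m}.\]
A position $t$ contributes to ${\rm wt}(H)$ iff some $f\in\tilde H$ is nonzero at $\pi^t$, so ${\rm wt}(H)=(p^m-1)-(|V|-1)=p^m-|V|$. Since every $f\in\tilde H$ is $\mathbb{F}_{p^e}$-linear, $V$ is an $\mathbb{F}_{p^e}$-subspace of $\mathbb{F}_{p^m}$. Writing $i:=\dim_{\mathbb{F}_{p^e}}V\in\{0,1,\ldots,n\}$ gives $|V|=p^{ei}$ and hence ${\rm wt}(H)=p^m-p^{ei}$. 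The theorem thus reduces to proving the bound $i\leq k-r$.

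For that bound, fix an $\mathbb{F}_{p^e}$-basis $v_1,\ldots,v_i$ of $V$ and consider the $\mathbb{F}_{p^m}$-linear evaluation map
\[\varepsilon\colon R_k\longrightarrow\mathbb{F}_{p^m}^{\,i},\qquad f\mapsto\bigl(f(v_1),\ldots,f(v_i)\bigr),\]
whose matrix in the basis $\{\sigma^j\}_{0\leq j<k}$ is the Moore matrix $M=\bigl[v_\alpha^{p^{jd}}\bigr]_{1\leq\alpha\leq i,\,0\leq j<k}$. The key input is the generalized Moore determinant theorem for the automorphism $\sigma$: since $v_1,\ldots,v_i$ are $\mathbb{F}_{p^e}$-linearly independent, $M$ has rank $\min(i,k)$, so $\ker\varepsilon$ has $\mathbb{F}_{p^m}$-dimension $\max(0,k-i)$. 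By construction $\tilde H\subseteq\ker\varepsilon$, so $r\leq k-i$, i.e., $i\leq k-r$.

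The main obstacle is the Moore rank input, equivalently the assertion that every nonzero $f\in R_k$ satisfies $\dim_{\mathbb{F}_{p^e}}\ker f\leq k-1$. When $d=e$ this is immediate from the ordinary-degree bound $p^{(k-1)d}<p^{ek}$. In general I would argue inductively in the Ore ring $\mathbb{F}_{p^m}\{\sigma\}$: if $f\in R_k$ vanishes at some $v\neq 0$, then the left ideal $\{h:h(v)=0\}$ is generated on the right by $\sigma-\sigma(v)/v$, so $f=g\cdot(\sigma-\sigma(v)/v)$ with $g$ of $\sigma$-degree $k-2$. Writing $\phi(w)=\sigma(w)-(\sigma(v)/v)w$, one checks directly that $\ker\phi=\mathbb{F}_{p^e}v$ is one-dimensional, whence $\dim\ker f=\dim\ker\phi+\dim(\mathrm{im}\,\phi\cap\ker g)\leq 1+\dim\ker g\leq 1+(k-2)=k-1$ by the inductive hypothesis. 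Once this lemma is in hand, everything else is a bookkeeping step.
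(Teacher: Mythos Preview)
Your argument is correct and structurally identical to the paper's: both compute ${\rm wt}(H)=p^m-|Z(H)|$ with $Z(H)=\bigcap_{\vec a}\ker f_{\vec a}$ an $\mathbb{F}_{p^e}$-subspace, and then bound $\dim_{\mathbb{F}_{p^e}}Z(H)\le k-r$ via the full rank of the Moore (linearized Vandermonde) matrix on an $\mathbb{F}_{p^e}$-independent set (the paper's Corollaries~2.4--2.5). The one substantive difference is how the underlying kernel bound $\dim_{\mathbb{F}_{p^e}}\ker f_{\vec a}\le k-1$ for $\vec a\neq 0$ is established. The paper's Lemma~2.1 passes to the extension $\mathbb{F}_{p^{md/e}}$, where $f_{\vec a}$ is genuinely $\mathbb{F}_{p^d}$-linear so the naive degree bound applies, and then descends using that an $\mathbb{F}_{p^e}$-basis of $\mathbb{F}_{p^m}$ is an $\mathbb{F}_{p^d}$-basis of $\mathbb{F}_{p^{md/e}}$ (linear disjointness). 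Your route---right division by $\sigma-\sigma(v)/v$ in the Ore ring $\mathbb{F}_{p^m}\{\sigma\}$ and induction on the $\sigma$-degree---stays entirely inside $\mathbb{F}_{p^m}$ and is a clean self-contained alternative; the role of $e=(m,d)$ appears transparently in the verification $\ker(\sigma-\sigma(v)/v)=\mathbb{F}_{p^e}v$. The paper's version is a line or two shorter, while yours avoids the auxiliary field. Your appeal to Artin's theorem for the injectivity of $\vec a\mapsto c_{\vec a}$ is also fine; the paper instead reads this off from the square Moore determinant (Lemma~2.3), which is equivalent.
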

Our main result is the following.
\begin{theorem}\label{frequency}If $r>0$, and $0\leq i\leq k-r$, then
\[n_{r,p^m-p^{ei}}=\binom{\frac{m}{e}}{i}_{p^e}\sum_{j=0}^{k-r-i} (-1)^{j}p^{e\binom{j}{2}}\binom{k-j-i}{r}_{p^m}\binom{\frac{m}{e}-i}{j}_{p^e},\]
where $\binom{n}{i}_{q}$ is the number of $i$-dimensional $\mathbb{F}_{q}$-subspaces of $\mathbb{F}_{q}^{n}$.
In particular, $d_r=p^m-p^{e(k-r)}$.
\end{theorem}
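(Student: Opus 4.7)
The plan is to translate the problem into counting $\mathbb{F}_{p^m}$-subspaces $A\subseteq\mathbb{F}_{p^m}^k$ according to the $\mathbb{F}_{p^e}$-dimension of an associated ``kernel'' subspace of $\mathbb{F}_{p^m}$, and then to apply the $q$-analog of M\"obius inversion on the lattice of $\mathbb{F}_{p^e}$-subspaces of $\mathbb{F}_{p^m}$. Since $k\le m/e$ and $\tau(x)=x^{p^d}$ generates $\mathrm{Gal}(\mathbb{F}_{p^m}/\mathbb{F}_{p^e})$, Artin/Dedekind independence of characters makes $\vec a\mapsto c_{\vec a}$ an $\mathbb{F}_{p^m}$-linear isomorphism $\mathbb{F}_{p^m}^k\to C$, so $r$-dimensional subcodes $H$ correspond bijectively to $r$-dimensional $\mathbb{F}_{p^m}$-subspaces $A\subseteq\mathbb{F}_{p^m}^k$. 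Each $f_{\vec a}$ is $\mathbb{F}_{p^e}$-linear on $\mathbb{F}_{p^m}$, so $V(A):=\bigcap_{\vec a\in A}\ker f_{\vec a}$ is an $\mathbb{F}_{p^e}$-subspace and ${\rm wt}(H)=p^m-|V(A)|$; hence $n_{r,p^m-p^{ei}}$ equals the number of $r$-dimensional $A$'s with $\dim_{\mathbb{F}_{p^e}}V(A)=i$.

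The key intermediate claim is a dimension formula for the annihilator $A(V):=\{\vec a\in\mathbb{F}_{p^m}^k:f_{\vec a}|_V=0\}$: for every $\mathbb{F}_{p^e}$-subspace $V\subseteq\mathbb{F}_{p^m}$ with $\dim V\le k$,
\[\dim_{\mathbb{F}_{p^m}}A(V)=k-\dim V.\]
Choosing an $\mathbb{F}_{p^e}$-basis $v_1,\dots,v_s$ of $V$, the evaluation map $\vec a\mapsto(f_{\vec a}(v_1),\dots,f_{\vec a}(v_s))$ from $\mathbb{F}_{p^m}^k$ to $\mathbb{F}_{p^m}^s$ has matrix $(v_i^{p^{jd}})$; I would prove surjectivity by using the nonsingular $s\times s$ Moore submatrix on columns $j=0,\dots,s-1$, which follows from the $\mathbb{F}_{p^e}$-independence of $v_1,\dots,v_s$ together with $s\le k\le m/e$. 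As an immediate consequence, setting $N_r(V):=\#\{A:\dim A=r,\ V(A)\supseteq V\}$, one obtains $N_r(V)=\binom{k-\dim V}{r}_{p^m}$, a quantity depending only on $\dim V$.

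Letting $n_r(V):=\#\{A:\dim A=r,\ V(A)=V\}$, the relation $N_r(V)=\sum_{V'\supseteq V}n_r(V')$ inverts via the classical $q$-M\"obius function $\mu(V,V')=(-1)^{\dim V'-\dim V}p^{e\binom{\dim V'-\dim V}{2}}$ on the lattice of $\mathbb{F}_{p^e}$-subspaces. Summing over $V$ of dimension $i$, swapping the order of summation over pairs $V\subseteq V'$, and using that $\mathbb{F}_{p^m}$ contains $\binom{m/e}{i+j}_{p^e}$ subspaces of dimension $i+j$ each containing $\binom{i+j}{i}_{p^e}$ subspaces of dimension $i$, produces a double sum that reduces to the stated closed form via the identity $\binom{m/e}{i+j}_{p^e}\binom{i+j}{i}_{p^e}=\binom{m/e}{i}_{p^e}\binom{m/e-i}{j}_{p^e}$. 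For the ``in particular'' clause, Theorem~\ref{valueset} excludes $i>k-r$, while at $i=k-r$ only the $j=0$ term survives and equals $\binom{m/e}{k-r}_{p^e}>0$; hence $d_r=p^m-p^{e(k-r)}$.

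The main obstacle will be the dimension formula $\dim A(V)=k-\dim V$: the Moore-matrix nonsingularity is standard but the hypothesis $k\le m/e$ is indispensable, and securing that $N_r(V)$ depends only on $\dim V$ is precisely what allows the M\"obius sum over the Grassmannian to collapse into a clean expression in Gaussian binomials.
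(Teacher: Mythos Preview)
Your proposal is correct and follows essentially the same route as the paper: your $V(A)$ is the paper's $Z(H)$, your $N_r(V)$ and $n_r(V)$ are the paper's $|C_{r,U}|$ and $|S_{r,U}|$, your dimension formula $\dim A(V)=k-\dim V$ via Moore-matrix nonsingularity is exactly the content of the paper's Corollary~\ref{rankcor} and Corollary~\ref{precounting}, and the $q$-M\"obius inversion step is identical. You make explicit a couple of points the paper leaves implicit (the injectivity of $\vec a\mapsto c_{\vec a}$ and the Gaussian-binomial identity used to collapse the double sum), but the architecture is the same.
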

The classical weight distribution of the linearized Reed-Solomon code follows from a result of
Delsarte \cite{Del}  on the rank distribution of bilinear forms.

\section{\small{LINEARIZED VAN DER MONDE MATRIX}}
In this section, we will introduce the notion of linearized Van Der Monde matrices.
\begin{lemma}\label{nullset}If $\vec {a}\in{\mathbb F}_{p^m}^k$ is nonzero, then the number of zeros of
$f_{\vec {a}}(x)$ is $\leq p^{e(k-1)}$.
\end{lemma}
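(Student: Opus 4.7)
The plan is to work with the zero set of $f_{\vec{a}}$ in the algebraic closure $\overline{\mathbb{F}_p}$, exploit its $\mathbb{F}_{p^d}$-linear structure there, and then descend back to $\mathbb{F}_{p^m}$.

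First I would note that for any $\lambda\in\mathbb{F}_{p^d}$ one has $\lambda^{p^{jd}}=\lambda$, so $f_{\vec{a}}(\lambda x)=\lambda f_{\vec{a}}(x)$; together with the additivity of the Frobenius maps $x\mapsto x^{p^{jd}}$, this shows that $f_{\vec{a}}$ is $\mathbb{F}_{p^d}$-linear on $\overline{\mathbb{F}_p}$. Consequently its zero set $Z\subseteq\overline{\mathbb{F}_p}$ is an $\mathbb{F}_{p^d}$-subspace, and $|Z|$ is a power of $p^d$. Trimming trailing zero coefficients of $\vec{a}$ (which only lowers the nominal length $k$, and strengthens the bound we seek), I may assume $a_{k-1}\neq 0$; then $f_{\vec{a}}$ has ordinary degree $p^{(k-1)d}$ and $|Z|\le p^{(k-1)d}$, so $\dim_{\mathbb{F}_{p^d}}Z\le k-1$.

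The remaining task is to compare $\dim_{\mathbb{F}_{p^e}}(Z\cap\mathbb{F}_{p^m})$ with $\dim_{\mathbb{F}_{p^d}}Z$. For this I would establish the following sub-claim: if $\beta_1,\ldots,\beta_s\in\mathbb{F}_{p^m}$ are linearly independent over $\mathbb{F}_{p^e}$, then they remain linearly independent over $\mathbb{F}_{p^d}$ inside $\overline{\mathbb{F}_p}$. This is the crux of the lemma and the only place where the hypothesis $e=(m,d)$ really gets used: it amounts to the linear disjointness of $\mathbb{F}_{p^d}$ and $\mathbb{F}_{p^m}$ over their common subfield $\mathbb{F}_{p^e}$, equivalently to the canonical isomorphism $\mathbb{F}_{p^d}\otimes_{\mathbb{F}_{p^e}}\mathbb{F}_{p^m}\cong\mathbb{F}_{p^{\mathrm{lcm}(d,m)}}$ (both sides have $\mathbb{F}_{p^e}$-dimension $dm/e^2$). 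Tensoring the $\mathbb{F}_{p^e}$-span of the $\beta_i$'s up to $\mathbb{F}_{p^d}$ then preserves its dimension and yields the sub-claim.

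Applying the sub-claim to an $\mathbb{F}_{p^e}$-basis of $Z\cap\mathbb{F}_{p^m}$ exhibits an $\mathbb{F}_{p^d}$-linearly independent subset of $Z$ of the same size, so $\dim_{\mathbb{F}_{p^e}}(Z\cap\mathbb{F}_{p^m})\le\dim_{\mathbb{F}_{p^d}}Z\le k-1$, whence $|Z\cap\mathbb{F}_{p^m}|\le p^{e(k-1)}$. The main obstacle is the disjointness sub-claim; everything else is bookkeeping. In keeping with the section title, the same content is probably organized in the paper around the generalized Vandermonde (Moore) matrix $(\alpha_i^{p^{jd}})$: the lemma is the contrapositive of the statement that this $k\times k$ matrix is nonsingular whenever $\alpha_1,\ldots,\alpha_k\in\mathbb{F}_{p^m}$ are $\mathbb{F}_{p^e}$-linearly independent, a statement which rests on the same linear disjointness fact.
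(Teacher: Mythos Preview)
Your argument is correct and essentially identical to the paper's: both bound the $\mathbb{F}_{p^d}$-dimension of the zero set by $k-1$ via the degree, and then invoke linear disjointness of $\mathbb{F}_{p^m}$ and $\mathbb{F}_{p^d}$ over $\mathbb{F}_{p^e}$ (the paper phrases this as ``a basis of $\mathbb{F}_{p^m}$ over $\mathbb{F}_{p^e}$ is also a basis of $\mathbb{F}_{p^{md/e}}$ over $\mathbb{F}_{p^d}$''). One cosmetic difference: the paper works in the compositum $\mathbb{F}_{p^{md/e}}$ rather than in $\overline{\mathbb{F}_p}$, and your closing guess about the organization is inverted---the paper uses this lemma to \emph{prove} the Moore/Vandermonde nonsingularity (Lemma~\ref{rank}), not the other way around.
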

\begin{proof} Suppose that $\vec {a} \neq 0$. Note that $\{x \in  \mathbb{F}_{p^{md/e}}|f_{\vec {a}}(x)=0\}$ is a subspace of $\mathbb{F}_{p^{md/e}}$ over ${\mathbb F}_{p^d}$ of dimension $\leq (k-1)$.
As $(m, d)=e$,
a basis of $\mathbb{F}_{p^{m}}$ over ${\mathbb{F}_{p^{e}}}$ is also a basis of $\mathbb{F}_{p^{md/e}}$ over $\mathbb{F}_{p^{d}}$.
It follows that, the ${\mathbb F}_{p^e}$-space consisting of the zeros of $f_{\vec a}(x)$ is of dimension $\leq k-1$. The lemma now follows.
\end{proof}

\begin{definition}\label{van}If $(m,d)=e$, and $x_{1}, x_{2},\cdots,x_{k}$ are $\mathbb{F}_{p^{e}}$-linearly independent elements in  $\mathbb{F}_{p^{m}}$, then matrix\[
\left( \begin{array}{cccc}
x_1&{x_1}^{p^d}&\cdots&{x_1}^{p^{(k-1)d}}\\
x_2&{x_2}^{p^d}&\cdots&{x_2}^{p^{(k-1)d}}\\
\vdots&\vdots&\ddots&\vdots\\
x_k&{x_k}^{p^d}&\cdots&{x_k}^{p^{(k-1)d}}\\
\end{array} \right)\] is called $p^e$-linearized Van Der Monde matrix.
\end{definition}
The following lemma is a slight generalization of a lemma of Cao-Lu-Wan-Wang-Wang \cite{CLW}.
\begin{lemma}\label{rank} A linearized Van Der Monde matrix is of full rank.
\end{lemma}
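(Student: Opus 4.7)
The plan is to argue by contradiction, reducing the rank question to an application of Lemma \ref{nullset}. Since the matrix is square of size $k\times k$, failure of full rank is equivalent to linear dependence of its columns over ${\mathbb F}_{p^m}$. So I would suppose there exists a nonzero vector $\vec{a}=(a_0,\ldots,a_{k-1})\in{\mathbb F}_{p^m}^k$ realizing such a dependence; this dependence then reads exactly $f_{\vec{a}}(x_i)=0$ for every $i=1,\ldots,k$. The goal is to show this produces too many zeros of $f_{\vec a}$.

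The key structural observation I would next record is that $f_{\vec{a}}$, viewed as a map ${\mathbb F}_{p^m}\to{\mathbb F}_{p^m}$, is ${\mathbb F}_{p^e}$-linear. This is where the hypothesis $e=(m,d)$ enters: because $e\mid d$, each Frobenius power $x\mapsto x^{p^{jd}}$ fixes ${\mathbb F}_{p^e}$ pointwise, so $(\lambda x)^{p^{jd}}=\lambda x^{p^{jd}}$ for $\lambda\in{\mathbb F}_{p^e}$, and linearity of $f_{\vec{a}}$ follows by summing. Consequently, the zero set of $f_{\vec{a}}$ in ${\mathbb F}_{p^m}$ is an ${\mathbb F}_{p^e}$-subspace.

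Since the elements $x_1,\ldots,x_k$ are assumed ${\mathbb F}_{p^e}$-linearly independent and all lie in this zero set, the zero set has ${\mathbb F}_{p^e}$-dimension at least $k$, hence contains at least $p^{ek}$ elements. This directly contradicts Lemma \ref{nullset}, which bounds the number of zeros of $f_{\vec{a}}$ by $p^{e(k-1)}$. Therefore no such $\vec a$ exists, and the matrix has full rank.

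The argument is essentially routine once one unpacks the two inputs: the correspondence between column dependencies of the matrix and zeros of a linearized polynomial $f_{\vec{a}}$, and the ${\mathbb F}_{p^e}$-linearity of $f_{\vec{a}}$ forced by $e\mid d$. The only conceptual point worth flagging, which would be the likely source of any error, is checking this linearity carefully; the dimension count then drops out immediately from Lemma \ref{nullset}.
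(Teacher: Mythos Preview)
Your proposal is correct and follows essentially the same approach as the paper: assume a nontrivial linear relation among the columns, interpret it as $f_{\vec a}(x_i)=0$ for all $i$, use ${\mathbb F}_{p^e}$-linearity of $f_{\vec a}$ to conclude the null set has size at least $p^{ek}$, and contradict Lemma~\ref{nullset}. The only difference is cosmetic: the paper phrases it as showing directly that $a_0=\cdots=a_{k-1}=0$ rather than framing it as a contradiction, and it leaves the ${\mathbb F}_{p^e}$-linearity step implicit where you spell it out.
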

\begin{proof} Suppose that $(m,d)=e$, $x_{1}, x_{2},\cdots,x_{k}$ are $\mathbb{F}_{p^{e}}$-linearly independent elements in  $\mathbb{F}_{p^{m}}$, and
  $a_0,a_1,\cdots,a_{k-1}$ are elements of ${\mathbb F}_{p^m}$ such that
\[a_{0}\left(
\begin{array}{c}
x_{1}\\
x_{2}\\
\cdots\\
x_{k}\\
    \end{array}
  \right)+a_{1}\left(
\begin{array}{c}
x_{1}^{p^{d}}\\
x_{2}^{p^{d}}\\
\cdots\\
x_{k}^{p^{d}}\\
    \end{array}
  \right)+\cdots+a_{k-1}\left(
\begin{array}{c}
x_{1}^{p^{(k-1)d}}\\
x_{2}^{p^{(k-1)d}}\\
\cdots\\
x_{k}^{p^{(k-1)d}}\\
    \end{array}
  \right)=0.\]
  Then $x_1,x_2,\cdots,x_{k}$ are zeros of $f_{\vec a}$. It follows that $|{\rm Null}(f_{\vec {a}})|\geq p^{ek}$.  By Lemma \ref{nullset},
  we have $a_0=a_1=\cdots=a_{k-1}=0$.
  Therefore \[\left(
\begin{array}{c}
x_{1}\\
x_{2}\\
\cdots\\
x_{k}\\
    \end{array}
  \right),\left(
\begin{array}{c}
x_{1}^{p^{d}}\\
x_{2}^{p^{d}}\\
\cdots\\
x_{k}^{p^{d}}\\
    \end{array}
  \right),\cdots,\left(
\begin{array}{c}
x_{1}^{p^{(k-1)d}}\\
x_{2}^{p^{(k-1)d}}\\
\cdots\\
x_{k}^{p^{(k-1)d}}\\
    \end{array}
  \right)\]
  are linearly independent over ${\mathbb F}_{p^m}$.
  It follows that the matrix\[
\left( \begin{array}{cccc}
x_1&{x_1}^{p^d}&\cdots&{x_1}^{p^{(k-1)d}}\\
x_2&{x_2}^{p^d}&\cdots&{x_2}^{p^{(k-1)d}}\\
\vdots&\vdots&\ddots&\vdots\\
x_k&{x_k}^{p^d}&\cdots&{x_k}^{p^{(k-1)d}}\\
\end{array} \right)\] is of full rank. The lemma is proved.
\end{proof}
The above lemma implies the following.
\begin{corollary}\label{rankcor}If $i\leq k$ and $x_{1},x_{2},\cdots,x_{i}$ are $\mathbb{F}_{p^{e}}$-linearly independent elements in  $\mathbb{F}_{p^{m}}$, then the matrix \[
\left( \begin{array}{cccc}
x_1&{x_1}^{p^d}&\cdots&{x_1}^{p^{(k-1)d}}\\
x_2&{x_2}^{p^d}&\cdots&{x_2}^{p^{(k-1)d}}\\
\vdots&\vdots&\ddots&\vdots\\
x_i&{x_i}^{p^d}&\cdots&{x_i}^{p^{(k-1)d}}\\
\end{array} \right)\] is of full rank over ${\mathbb F}_{p^m}$.
\end{corollary}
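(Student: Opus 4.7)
The plan is to reduce the corollary immediately to Lemma \ref{rank} by a completion-of-basis argument. Since $\dim_{\mathbb{F}_{p^e}}\mathbb{F}_{p^m}=m/e\geq k\geq i$, the $\mathbb{F}_{p^e}$-linearly independent tuple $x_1,\dots,x_i$ can be extended to an $\mathbb{F}_{p^e}$-linearly independent tuple $x_1,\dots,x_k$ in $\mathbb{F}_{p^m}$ by successively choosing $x_{i+1},\dots,x_k$ outside the current $\mathbb{F}_{p^e}$-span. This is the only hypothesis needed to invoke Lemma \ref{rank}.

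Next, I would form the $k\times k$ linearized Van Der Monde matrix associated with the extended tuple $x_1,\dots,x_k$. By Lemma \ref{rank} it has full rank over $\mathbb{F}_{p^m}$, so its $k$ rows are $\mathbb{F}_{p^m}$-linearly independent vectors in $\mathbb{F}_{p^m}^k$. The matrix in the statement of the corollary is precisely the submatrix consisting of the first $i$ of these rows; since any subfamily of a linearly independent family is linearly independent, these $i$ rows remain linearly independent, so the $i\times k$ matrix has row rank $i$, which is full rank because $i\leq k$.

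There is no genuine obstacle here; the only point worth flagging is the interpretation of ``full rank'' for the non-square $i\times k$ matrix, which in this context must mean row rank $i$ (equivalently, column rank $i$, witnessed by the $i$ columns coming from any set of $i$ linearly independent columns of the completed $k\times k$ matrix). The proof should therefore be short, essentially a one-paragraph remark after Lemma \ref{rank}.
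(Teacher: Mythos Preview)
Your argument is correct and is exactly the intended one: the paper merely records that Corollary~\ref{rankcor} follows from Lemma~\ref{rank}, and your completion-of-basis step (using $k\le m/e$) together with the observation that the $i\times k$ matrix is a row-submatrix of the full-rank $k\times k$ linearized Van Der Monde matrix is precisely how that implication is realized.
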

The above corollary implies the following.
\begin{corollary}\label{dimbound}If $H$ is an ${\mathbb F}_{p^m}$-subspace of ${\mathbb F}_{p^m}^k$ of dimension $r>0$, then
\[Z(H)=\{ x \in \mathbb{F}_{p^{m}}|~ f_{\vec {a}}(x)=0,~\forall \vec{a}\in H\}\]
an ${\mathbb F}_{p^e}$-subspace of ${\mathbb F}_{p^m}$ of dimension $\leq k-r$.
\end{corollary}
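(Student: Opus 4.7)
The plan is to argue in two stages: first establish that $Z(H)$ is an $\mathbb{F}_{p^e}$-subspace of $\mathbb{F}_{p^m}$, then bound its dimension via Corollary~\ref{rankcor} and a contradiction argument.

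For the first stage I would note that since $e=(m,d)$ divides $d$, the Frobenius $x\mapsto x^{p^{jd}}$ fixes $\mathbb{F}_{p^e}$ pointwise, so each $x\mapsto x^{p^{jd}}$ is $\mathbb{F}_{p^e}$-linear on $\mathbb{F}_{p^m}$. Multiplication by $a_j\in\mathbb{F}_{p^m}\supseteq\mathbb{F}_{p^e}$ is also $\mathbb{F}_{p^e}$-linear, so every $f_{\vec a}$ is an $\mathbb{F}_{p^e}$-linear endomorphism of $\mathbb{F}_{p^m}$. Consequently $Z(H)$, being the intersection of the kernels of the $f_{\vec a}$ with $\vec a\in H$, is an $\mathbb{F}_{p^e}$-subspace of $\mathbb{F}_{p^m}$.

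For the dimension bound I would argue by contradiction: suppose $\dim_{\mathbb{F}_{p^e}}Z(H)\geq k-r+1$, and pick $\mathbb{F}_{p^e}$-linearly independent elements $x_1,\dots,x_{k-r+1}\in Z(H)$ (note $k-r+1\leq k\leq m/e$, so this is permissible). Form the $(k-r+1)\times k$ matrix $V$ whose $(t,j)$-entry is $x_t^{p^{jd}}$. By Corollary~\ref{rankcor}, $V$ has full row rank $k-r+1$ over $\mathbb{F}_{p^m}$.

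Now consider the $\mathbb{F}_{p^m}$-linear evaluation map $\Phi\colon\mathbb{F}_{p^m}^k\to\mathbb{F}_{p^m}^{k-r+1}$ sending $\vec a\mapsto\bigl(f_{\vec a}(x_1),\dots,f_{\vec a}(x_{k-r+1})\bigr)=V\vec a^{T}$. Surjectivity of $\Phi$ (from the rank computation) forces $\dim\ker\Phi=k-(k-r+1)=r-1$. But by definition of $Z(H)$ we have $H\subseteq\ker\Phi$, yielding $r=\dim H\leq r-1$, a contradiction. Hence $\dim_{\mathbb{F}_{p^e}}Z(H)\leq k-r$.

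The only nontrivial step is the rank assertion for $V$, which is handed to us by Corollary~\ref{rankcor}; the rest is bookkeeping. I do not anticipate a real obstacle, though one should be careful to verify that $k-r+1\leq m/e$ so that the required $\mathbb{F}_{p^e}$-linearly independent tuple in $\mathbb{F}_{p^m}$ actually exists, and to confirm $\mathbb{F}_{p^e}$-linearity of $f_{\vec a}$ uses $e\mid d$ rather than any coefficient hypothesis.
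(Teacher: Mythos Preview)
Your proof is correct and follows exactly the route the paper indicates: the paper simply states that this corollary is a consequence of Corollary~\ref{rankcor} and gives no further details, and your argument (evaluate at a putative $(k-r+1)$-tuple of $\mathbb{F}_{p^e}$-independent zeros, use Corollary~\ref{rankcor} to compute the rank of the evaluation map, and compare kernel dimensions) is precisely the natural way to unpack that implication. The auxiliary checks you flag ($k-r+1\le k\le m/e$ and $\mathbb{F}_{p^e}$-linearity of each $f_{\vec a}$ from $e\mid d$) are both routine and handled correctly.
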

The above corollary implies the following.
\begin{corollary}\label{precounting} If $r>0$, and $U$ is an ${\mathbb F}_{p^e}$-subspace of ${\mathbb F}_{p^m}$ of dimension $i$, then
\[
|C_{r,U}|=
\begin{cases}
0, &  i >k-r,\\
\binom{k-i}{r}_{p^m}, & i \leq k-r,
\end{cases}
\]
where
\[C_{r,U}=\{ H\subseteq \mathbb{F}_{p^{m}}^{k}|~Z(H)\supseteq U,~\dim_{{\mathbb F}_{p^m}}H=r\}.\]
\end{corollary}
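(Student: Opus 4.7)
The plan is to recognize that $Z(H)\supseteq U$ is a linear condition on $H$, thereby identifying $C_{r,U}$ with the set of $r$-dimensional subspaces of a certain linearly cut-out ambient space whose dimension we can compute via Corollary \ref{rankcor}.

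First, I would dispose of the case $i>k-r$ directly: if $Z(H)\supseteq U$ then $\dim_{\mathbb{F}_{p^e}}Z(H)\geq i>k-r$, contradicting Corollary \ref{dimbound}. So $C_{r,U}=\emptyset$, giving $|C_{r,U}|=0$.

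For the case $i\leq k-r$, I would introduce the annihilator
\[V_U=\{\vec{a}\in\mathbb{F}_{p^m}^k\mid f_{\vec a}(x)=0\text{ for all }x\in U\}.\]
Since $f_{\vec a}$ is $\mathbb{F}_{p^e}$-linear in $x$, picking a basis $x_1,\ldots,x_i$ of $U$ over $\mathbb{F}_{p^e}$ shows that $\vec a\in V_U$ iff $f_{\vec a}(x_j)=0$ for $j=1,\ldots,i$, i.e. iff $M\vec a^{\,T}=0$ where $M$ is the $i\times k$ matrix with $(j,l)$-entry $x_j^{p^{ld}}$. By Corollary \ref{rankcor}, $M$ has full row rank $i$ over $\mathbb{F}_{p^m}$, so $V_U$ is an $\mathbb{F}_{p^m}$-subspace of $\mathbb{F}_{p^m}^k$ of dimension exactly $k-i$.

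Next I would observe the tautology $Z(H)\supseteq U \iff H\subseteq V_U$: the forward direction is definition, the backward direction follows because every $\vec a\in H\subseteq V_U$ annihilates every $x\in U$ under $f_{\vec a}$. Hence $C_{r,U}$ equals the set of $r$-dimensional $\mathbb{F}_{p^m}$-subspaces of the $(k-i)$-dimensional space $V_U$, whose cardinality is $\binom{k-i}{r}_{p^m}$ by definition of the Gaussian binomial coefficient. The only real content is the rank computation in step two, and that is already handed to us by the corollary to Lemma \ref{rank}; the rest is bookkeeping, so I do not anticipate a serious obstacle.
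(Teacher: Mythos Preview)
Your proposal is correct and is exactly the argument the paper has in mind. The paper gives no explicit proof (it just writes ``The above corollary implies the following''), and your write-up---identifying $C_{r,U}$ with the $r$-dimensional subspaces of the annihilator $V_U$ and computing $\dim_{\mathbb{F}_{p^m}}V_U=k-i$ via the full-rank statement of Corollary~\ref{rankcor}---is precisely the intended filling-in of that implication; note in particular that the $i>k-r$ case already falls out of your main argument since $\binom{k-i}{r}_{p^m}=0$ when $k-i<r$, so the separate appeal to Corollary~\ref{dimbound} is optional.
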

\section{\small{PROOF OF THE THEOREMS}}
\paragraph{}
It is easy to see that
\[{\rm wt}(H)=p^m-p^{e\dim_{{\mathbb F}_{p^e}} Z(H)}.\]
Theorem \ref{valueset} now follows from Corollary \ref{dimbound}.
\paragraph{}
We now prove Theorem \ref{frequency}.
For an ${\mathbb F}_{p^e}$-subspace $U$ of ${\mathbb F}_{p^m}$, we write
\[S_{r,U}=\{ H\subseteq \mathbb{F}_{p^{m}}^{k} |~Z(H)=U,~\dim_{{\mathbb F}_{p^m}}H=r\}.\]
Then
\[n_{r,p^m-p^{ei}}=\sum_{\dim_{{\mathbb F}_{p^e}}U=i}|S_{r,U}|.\]
By definition,
\[|C_{r,W}|=\sum_{W\subseteq U}|S_{r,U}|.\]
By the $q$-binomial M\"{o}bius inversion formula,
\[|S_{r,W}|=\sum_{W \subseteq U}(-1)^{\rm{dim} U/W}p^{e\binom{\rm{dim} U/W}{2}}|C_{r,U}|.\]
It follows that
\[n_{r,p^m-p^{ei}}=\binom{\frac{m}{e}}{i}_{p^e}\sum_{j=0}^{k-r-i} (-1)^{j}p^{e\binom{j}{2}}\binom{k-j-i}{r}_{p^m}\binom{\frac{m}{e}-i}{j}_{p^e}.\]
Theorem \ref{frequency} is proved.

\end{document}